\newtheorem{proposition}{Proposition}
\theoremstyle{definition}
\newtheorem{definition}{Definition}
\newtheorem{example}{Example}
\newtheorem{assumption}{Assumption}
\newcommand{\setreal}{\mathbb{R}}
\newcommand{\setnat}{\mathbb{N}}
\newcommand{\setrealp}{\setreal_+}
\newcommand{\linf}{\mathcal{L}_\infty}
\newcommand{\inpset}{\mathcal{U}}
\newcommand{\Iinf}{i_{\infty}}
\newcommand{\noii}{e_i}
\newcommand{\noiv}{e_v}
\newcommand{\inp}{i}
\newcommand{\vref}{v_\text{r}}
\newcommand{\Iinfh}{\hat{\imath}_\infty}
\newcommand{\vbar}{\bar{v}}
\newcommand{\vrbar}{\bar{v}_\text{r}}
\newcommand{\Gcl}{G_{c\ell}}
\newcommand{\Gclk}{G_{c\ell,k}}
\tikzset{terminal/.style=[node distance=1.25cm,
		fblock/.style={rectangle,minimum 
									size=1.0cm,thick,draw=black},
		sum/.style={circle,minimum size=0.5cm,thick,draw=black},
		point/.style={rectangle,inner sep=0pt,minimum size=0pt},
		every new ->/.style={-latex},
		every new --/.style={},
		hvpath/.style={to path={-| (\tikztotarget)}},
		vhpath/.style={to path={|- (\tikztotarget)}}
		}
\title{\LARGE \bf
Feedback for nonlinear system identification}
\author{Thiago Burghi$^{1}$, Maarten Schoukens$^{2}$ and
Rodolphe Sepulchre$^{1}$% <-this % stops a space
\thanks{*The research leading to these results has received 
funding from the European Research Council under the
Advanced ERC Grant Agreement Switchlet n.670645 and from
the Brazilian federal agency for the Coordination of
Improvement of Higher Education Personnel (CAPES).}% <-this % stops a space
\thanks{$^{1}$T. Burghi ({\tt\small tbb29@cam.ac.uk}) and R. 
Sepulchre ({\tt\small r.sepulchre@eng.cam.ac.uk}) are with
the Department of Engineering, Control Group, University of
Cambridge, Cambridge CB2 1PZ, UK. $^{2}$M. Schoukens
(\texttt{m.schoukens@tue.nl}) is with the
Department of Electrical Engineering,  
Eindhoven University of Technology, 5612 AZ Eindhoven,
Netherlands.}
\thanks{\copyright  EUCA. Published in: 18th European
Control Conference (ECC), Napoli, Italy, June 25-28 2019.}}
\begin{document}

\maketitle
\thispagestyle{empty}
\pagestyle{empty}

\begin{abstract}

Motivated by neuronal models from neuroscience, we consider
the system identification of simple feedback structures 
whose behaviors include nonlinear phenomena such as
excitability, limit-cycles and chaos.
We show that output feedback is sufficient to solve the
identification problem in a two-step procedure. First, the
nonlinear static characteristic of the system is extracted,
and second, using a feedback linearizing law, a mildly
nonlinear system with an approximately-finite memory is
identified. In an ideal setting, the second step boils down
to the identification of a LTI system. To illustrate  
the method in a realistic setting, we present numerical 
simulations of the identification of two classical 
systems that fit the assumed model structure.
\end{abstract}
\begin{keywords}
	Excitability, Approximately-finite memory, Systems
	identification, Nonlinear systems, Output feedback 
\end{keywords}

\section{Introduction}

%the output of a system with a fading memory receives a vanishingly small contribution from initial conditions and from the distant past of the input

System identification of nonlinear dynamical systems has
been a topic of increasing interest in the recent years, 
see e.g. \cite{schoukens_parametric_2012} 
\cite{schoukens_identification_2017}. The approach in 
these references is block-oriented, and finds its roots in
specific structures such as Wiener-Hammerstein
models \cite{ljung_system_1999}. These block-oriented
approaches exploit the idea of estimating a best linear
approximation \cite{schoukens_identification_2005} of the
nonlinear system as a first step in the direction of solving the 
identification problem. A common underlying assumption in
the estimation of approximate linear models is that the
system class has some variant of the fading memory property, 
meaning that the output signals depend on the past of the 
input signals with a forgetting factor, see e.g. 
\cite{boyd_fading_1985} and 
\cite{sandberg_approximately-finite_1992}. 
%Conversely, these early works showed that 
%%causal 
%system operators with a fading memory 
%%\cite{boyd_fading_1985} or an
%%approximately-finite memory \cite{sandberg_uniform_1993} 
%can be uniformly approximated with any degree of accuracy by 
%%the cascade interconnection of a multiple-output linear system
%%with a static nonlinear readout map, justifying the 
%common structures used in nonlinear system identification.

The present work seeks to extend the above methods to
input-output nonlinear behaviors that can be transformed 
by output feedback into operators with a fading memory. 
More specifically, we observe that the simple
interconnection structure in Figure \ref{fig:lure_passive}
possesses that property, by inspection, and is general
enough to include nonlinear behaviors that are hard to
identify with state-of-the art methods.

In particular, we are motivated by conductance-based models 
of neurons. Those models, pioneered by Hodgkin and Huxley 
in their seminal work \cite{hodgkin_quantitative_1952}, have 
become central to neurophysiology and computational neuroscience.
Their behaviors include nonlinear phenomena such as
excitability, limit cycles, bistability, and bursting. Yet,
all conductance-based models share the structure in Figure
\ref{fig:lure_passive}, where the passive element models the
passive behavior of the cellular membrane and the 
fading memory operator models the voltage-gated conductance
of ion channels. We advocate that such models can be transformed 
by feedback into operators with a fading memory, and that this
property makes them amenable to rigorous system identification. 
This property is in fact at the root of the voltage-clamp 
experiment that has been central to the conductance-based 
modelling principle over the last seventy years.

As a first step, in this paper, we focus on the elementary
situation where the fading memory component in Figure
\ref{fig:lure_passive} is static, and the passive element is LTI.
The feedback structure then becomes the classical structure 
of a Lure system. This simplified structure already 
includes famous models such as the excitable circuit of 
Fitzugh and Nagumo \cite{nagumo_active_1962} and the 
chaotic circuit of Chua \cite{matsumoto_chaotic_1984}. 
We show that the identification of such nonlinear circuits 
becomes straightforward if we introduce output feedback in
experiment design. Not surprisingly, the static element can be
identified separately from the LTI element. This allows the use
of a feedback linearizing law to transform the identification 
problem into that of identifying a mildly nonlinear 
system with an approximately-finite memory 
\cite{sandberg_approximately-finite_1992} -- a specific
type of fading memory property.

%\begin{figure}
%	\centering
%	\ctikzset{bipoles/resistor/width=0.3}
%	\begin{circuitikz}[american voltages][scale=1]
%		\draw (2,3.6) to (0,3.6) to[generic=Passive,v=$v$] 
%		%=$Z(s)$ 
%		%node[midway,right=1.3em,align=left]{a} 
%		(0,0) to (2,0);
%		%to[american current source, l=$\Iapp$] 
%		\node[draw,align=center,minimum width=2cm,minimum
%		height=4cm,anchor=south west] at (2,-0.2)
%		{Fading \\ memory};
%		%\draw (4,3.6) to[twoport,t=switch] (4,0);
%		\draw (-1,0) to[short,o-*] (0,0);
%		\draw (-1,3.6) to[short,*-o] (0,3.6);
%		\draw[-latex] (-0.8,3.8) -> (-0.2,3.8) node[above]
%		{{\small $\inp$}};
%		\draw[-latex] (0.2,3.8) -> (0.8,3.8) node[above]
%		{{\small $\inl$}};
%		%\node at (1,1.8)[rotate=90]{{\scriptsize Tunnel Diode}};
%	\end{circuitikz}
%	\caption{One-port circuit structure with input
%	$i$ and output $v$.}
%	\label{fig:circuit}
%\end{figure}

\begin{figure}
	\centering
	\begin{tikzpicture}
		\node (v_dyn) at (0,0) [fblock,label={Passive}]{$G$};		
		\node (int_dyn) [fblock, below= 0.3 of v_dyn,
		align=center]{Fading \\ memory};
		\node (sum1) [sum,left=of v_dyn,
						label={175:$+$},label={-85:$-$}] {};		
		\coordinate[left= 1 of sum1] (input);
		\coordinate[right=of v_dyn] (split1);	
		\coordinate[right=of split1] (output);		
		%Connections
		\draw[-latex] (sum1) -- (v_dyn);
		\draw[-latex] (input) -- (sum1) node[left,pos=0]{$\inp(t)$};
		\draw[-latex] (v_dyn) -- (split1) -- (output) 
		node[right,pos=1]{$v(t)$};
		\draw[-latex] (split1) |- (int_dyn);
		\draw[-latex] (int_dyn) -| (sum1);
	\end{tikzpicture}
	\caption{A nonlinear feedback circuit. In this paper, the fading memory block 
	is a static nonlinearity $h(\cdot)$, and $G$ is LTI.}
	\label{fig:lure_passive}
\end{figure}
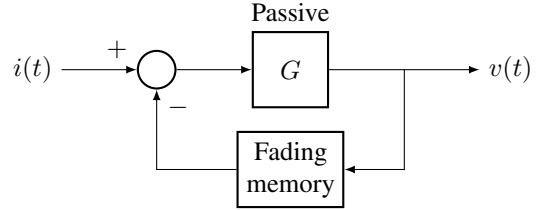

Although elementary, we believe that this methodology is
general and appealing for the identification of nonlinear
systems that do not have the fading memory property. 
This methodology is also in line with the idea that smart
experiment design is important to obtain good models of
nonlinear systems.

%{\color{red} [Rodolphe: Then you will illustrate that this
%situation already encompasses interesting examples such as
%FN and Chua. Then you will conclude the paper with a 
%discussion about the potential of this approach for
%conductance-based models and link the proposed method with
%the classical voltage-clamp experiment of
%neurophysiology.]}

The paper is organized in the following way. In Section
\ref{sec:model_structure}, we define the model
class which we are interested in identifying,
and give two examples of systems that fit in that 
class. In Section \ref{sec:fm_fb}, we 
recall the concept of approximately-finite memory, and
show that output feedback can be used to endow
systems in the defined model class with that property.
In Section \ref{sec:feedlin}, we develop the main
contribution of the paper, based on a two-step
identification procedure for identifying systems
in the model class. In Section \ref{sec:numerical} we
present numerical simulations concerning the identification
of the examples from Section \ref{sec:model_structure}
in a realistic setting. Some concluding remarks are
presented in Section \ref{sec:conductance}.

\section{Model structure}
\label{sec:model_structure} 

% We need to specify a subclass \inpset of nonlinear systems that will 
% be considered. The systems included in S depend on
% (i) The class of excitation signals (e.g., random multisines)
%     which are allowed at the input.
% (ii) The model class used to describe the input-output relation.
%      The model class will be used as a "mathematical vehicle" 
%      For the proofs.
% (iii) The convergence criterion

The model and input classes of interest are defined below.

\begin{definition}[Model class]
	\label{def:model_class}
	We will work with the class of systems given by
	the negative feedback interconnection between a causal 
	LTI component $G$ and a nonlinear static map $h$, as in
	Figure \ref{fig:lure_passive}. The map 
	$h:\setreal\to\setreal$ is a continuous function such
	that, without loss of generality, $h(0) = 0$. In
	addition, there are two real constants $\rho_1$ and 
	$\rho_2$ such that 
	\begin{equation}
		\label{eq:slope_condition}
		\rho_1 \le \frac{h(v_2)-h(v_1)}{v_2-v_1}  \le \rho_2
	\end{equation}
	for all $v_2 \neq v_1$. 
% Note that if we take \rho_2 = \infty we break the (global) 
% Lipschitz condition, which is assumed throughout the work 
% of Sandberg to ensure that all integral equations have a 
% unique solution. The FHN nonlinearity only satisfies the 
% slope condition locally, since it grows polynomially as 
% v->infty. To save time, I will modify the FHN nonlinearity
% slightly so that the behavior of the model remains the 
% same.
	The LTI component $G$ belongs to the set of
	real-rational, strictly proper transfer
	functions $G(s) = N(s)/D(s)$ such that
	all poles of $G(s)$ are in $\mathrm{Re}[s] < 0$,
	$\mathrm{Re}[G(j\omega)] \ge 0$ for all 
	$\omega \in \setreal$, and $G(0)>0$.
\end{definition}

The above implies that $G(s)$ is positive-real
\cite[Definition 6.4]{khalil_nonlinear_2002}, and that
%Recall \cite{khalil_nonlinear_2002} that a real-rational
%transfer function $G(s)$ is called positive-real if all
%poles of $G(s)$ are in $Re[s]\le 0$, $Re[G(j\omega)]\ge 0$
%for all $\omega \in \setreal$, and any purely imaginary
%pole $j\omega$ of $G(s)$ is a simple pole and 
%$\lim_{s\to j\omega} (s-j\omega)G(s)) \ge 0$. 
%The transfer function $G(s)$ is called strictly
%positive-real if all poles of $G(s)$ are in $Re[s] < 0$,
%$Re[G(j\omega)] > 0$ for all $\omega \in \setreal$, and
%either $G(\infty) > 0$ or $G(\infty) = 0$ and $\lim_{\omega
%\to \infty} \omega^2 Re[G(j\omega)] > 0 $.
$\mathrm{deg}\,D(s) - \mathrm{deg}\,N(s) = 1$. We denote
$\|G\|_1 = \int_0^\infty |g(t)| dt$, where $g(t)$ is the
impulse response of $G(s)$.
%and all the roots of $N(s)$ 
%are in the open left half-plane.
%We denote by $g(t)$ the impulse response of $G(s)$, 
%so that $g(t) = 0$ for $t<0$. 

\begin{definition}[Input class]
	\label{def:input_class} 
	For an arbitrary $\xi>0$, the input class 
	$\mathcal{U} \subset \linf(\setrealp)$ is the set of
	functions $u$ from $\setrealp = [0,\infty)$ to 
	$\setreal$ such that $\sup_{t\ge0} |u(t)| < \xi$.
%arbitrary $\xi>0$, and $\ltwoe(\setrealp)$ is the 
%set of all real-valued Lebesgue measurable functions 
%$x$ defined on $\setrealp$ such that for every 
%$\tau>0$, $\int_0^\tau x(t)^2 dt < \infty$. 
\end{definition}

%Note that any Lipschitz continuous function satisfies
%\eqref{eq:slope_condition} for some $\rho_1 = -\rho_2$. The
%slope condition \eqref{eq:slope_condition} also implies that
%$h$ belongs to a sector $[\tilde{\rho}_1,\tilde{\rho}_2] 
%\subseteq [\rho_1,\rho_2]$.

\subsection{Some examples}
\label{sec:examples} 

In this section, we provide two simple examples of
circuits that belong to the model class defined
above. 

\begin{example}
\label{ex:fhn} 
The Fitzhugh-Nagumo (FHN) circuit \cite{nagumo_active_1962} 
was proposed as a simple model of realistic neurons and
became a paradigm of excitability. The model has the 
state-space representation
\begin{equation}
	\label{eq:gen_vdp_ss}
	\begin{split}
		\tfrac{1}{20} \dot{v} &= - x - h(v) + \inp \\
		\dot{x} &= - \tfrac{3}{4} x + v 
	\end{split}
\end{equation}
where $h$ is given by the nonlinear characteristic\footnote{
Note that if $i \in \inpset$ we can always choose a bounded 
positively invariant state-space $X$ for this system. 
Then, $h(v)$ satisfies \eqref{eq:slope_condition} in $X$. Such 
a set $X$ can be found, for instance, using the Lyapunov function
$V(v,x) = v^2/2 + 10x^2$ and the standard arguments in
\cite[Section 4.8]{khalil_nonlinear_2002}.}
%the saturation at $v_m$ is introduced for convenience - 
%if $i$ is a bounded input, it is possible to find a
%sufficiently large $v_m$ such that the nonlinearity 
%output never saturates
\begin{equation}
	\label{eq:fhn_nl}
		h(v) = - v + v^3/3
%	\arraycolsep=1.4pt
%	\def\arraystretch{1.5}
%	 h(v) = \left\{ 
%	 \begin{array}{cccc} 
%	 
%	 -v &+& v^3/3, &\quad |v| < v_m \\ 
%	 -v_m &+& v_m^3/3 &\quad |v| \ge v_m \\  
%	 \end{array} 
%	 \right.
\end{equation}

Note that $\rho_1 = -1$, and the nonlinear resistance 
is locally active. It can be verified that the system 
\eqref{eq:gen_vdp_ss}-\eqref{eq:fhn_nl} belongs to the 
model class of Definition \ref{def:model_class}, with
\begin{equation}
	\label{eq:fhn_l}
	 G(s) = \frac{20s + 15}{s^2 + 0.75s + 20}.
	 %_{\text{fn}}
\end{equation}
For $i=0$, the system behaves as an autonomous relaxation
oscillator.
For $i=-1.5$, the output $v(t)$ converges to a
constant equilibrium, and the system is \textit{excitable}:
the output can display high-amplitude excursions
away from equilibrium, called \textit{spikes}, when the
input $i$ is increased momentarily past a certain
excitability threshold \cite{sepulchre_excitable_2018}. 

%The Van der Pol
%circuit is recovered by setting $\alpha > 0$, $\beta = 
%\alpha/3$, 
%$r=0$, and $c=\ell=1$. It is well-known that the 
% unforced
%($\inp=0$) Van der Pol circuit has a limit cycle whose 
%shape is
%modulated by the constant $\alpha>0$. The transfer function 
%of the
%Van der Pol circuit is given by
%\[
%	G_{\text{vdp}}(s) = \frac{s}{s^2 + 1},
%\]
%which is a positive-real transfer function. Due to the zero %of 
%$G_{\text{vdp}}(s)$ at the origin, the inverse input-output
%characteristic of the Van der Pol circuit does not exist, %since the
%zero dynamics matrix $Q=0$ is not invertible. Consequently,
%it is not possible to measure the inverse static i-o %characteristic
%of the Van der Pol circuit.

\end{example}

\begin{example}
\label{ex:chua} 
	The Chua circuit \cite{matsumoto_chaotic_1984} is	
	constructed with two capacitors $c_1>0$ and $c_2>0$, an
	inductor $\ell>0$, a resistor $r>0$ and a Chua diode.
	The Chua diode is a nonlinear resistive element with a
	piecewise-linear monotonically decreasing characteristic
	given by
	\begin{equation}
	\label{eq:chua_diode} 
		h(v) = \left\{ \begin{array}{cc}
		-0.1(v+1) + 4, & 
		v\le -1 \\ 
		-4v, & -1<v<1 \\ 
		-0.1(v-1) -4 & 
		v \ge 1
		\end{array}   \right.
	\end{equation}
	
	The passive element of the Chua circuit is given by
	\[ G(s) = \frac{\ell c_2 s^2 + \ell r s + 1}
	{\ell c_1 c_2 s^3+ \ell r (c_1+c_2) s^2 + c_1 s + r} \]
	
	In \cite{matsumoto_chaotic_1984}, it is shown that the
	autonomous Chua circuit presents chaotic behavior when
	the	parameters are given by $c_1 = 0.1$, $c_2 = 2$,
	$\ell = 1/7$ and $r=0.7$. 
	By forcing the Chua circuit with an external current,
	the circuit belongs to the model class of Definition 
	\ref{def:model_class}. Note that in this
	case $\rho_1 = -4$ and $\rho_2 = -0.1$. 
\end{example}

\section{Approximately-finite memory through 
output feedback}
\label{sec:fm_fb} 

In this section, we discuss how the feedback law
\begin{equation}
	\label{eq:linear_fb} 
	\inp = k(\vref - v)
\end{equation}
is used to endow a system from the model class of 
Definition \ref{def:model_class} with the 
approximately-finite memory property 
\cite{sandberg_approximately_1994}.

\subsection{Approximately-finite memory}
\label{sec:fading_memory} 

Consider the model class of Definition \ref{def:model_class} 
and the input class of Definition \ref{def:input_class}. 
Let $G$ denote the (convolution) operator defined by $G(s)$ 
and $H$ denote the operator defined by $(Hv)(t) = h(v(t))$. 
It can be shown\footnote{See e.g. 
\cite[Section 2.3]{sandberg_approximately_1994}, where 
$(I+GH)^{-1}$ is denoted by $V$.}, based on the stability of 
$G(s)$ and the Lipschitz property of $h$, that the map 
$(I+GH)^{-1}$ is well defined on $\linf(\setrealp)$. Thus,
\begin{equation*}
	\begin{split}
		v &= (I+GH)^{-1}(Gi+g_0)
	\end{split}
\end{equation*}
where $g_0 \in  \linf(\setrealp)$ is a term taking into 
account the exponentially decaying initial conditions
of the linear system. 

Let $\Gcl$ denote the restriction of $(I+GH)^{-1}G$
to $\inpset$ (under zero initial conditions, 
this is the map from the input $i$ to the output $v$). 
We are interested in the following property.

\begin{definition}[\cite{sandberg_approximately_1994}]
\label{def:approximately_finite} 

	Let $F:\inpset \to \linf(\setrealp)$ be a causal 
	time-invariant operator. We say $F$ has 
	approximately-finite memory with respect to 
	$\inpset$, or $F \in \mathcal{A}(\inpset)$, if for any
	given $\epsilon>0$, there is a $\Delta>0$ such that 
	\begin{equation}
		\label{eq:fading_memory} 
		|(F u)(t) - (FW_{t,\eta}u)(t)| < \epsilon, 
	\quad t\ge 0	
	\end{equation}	
	for all $\eta\ge\Delta$ and all $u \in \inpset$, 
	where $W_{t,\eta}$ is the window operator
	\begin{equation}
		(W_{t,\eta}u)(\tau) = \left\{ 
			\begin{array}{cc}
					u(\tau), &\quad t-\eta \le \tau \le t \\ 
					0, &\text{otherwise}
					\end{array} 		
		\right.		 
	\end{equation}
\end{definition}

The inequality \eqref{eq:fading_memory} shows that 
the recent past of the input of a system in 
$\mathcal{A}(\inpset)$ dominates the behavior of its
output. An important result linking Definition
\ref{def:approximately_finite} to the circle criterion is
\cite{sandberg_approximately_1994}. In our context, we have
the following statement.

\begin{proposition}
\label{prop:circle_criterion} 
Assume that one of the following two conditions are satisfied:

(i) $0 \le \rho_1 < \rho_2$, all poles of $G(s)$ are 
	in $\mathrm{Re}[s] < 0$, and 
	$\mathrm{Re}[G(j\omega)]\ge 0$ for all 
	$\omega \in \setreal$. 

(ii) $\rho_1 < 0 < \rho_2$, all poles of $G(s)$ are 
	in $\mathrm{Re}[s] < 0$, and the locus of 
	$G(j\omega)$ for $-\infty<\omega<\infty$ is 
	contained within the circle of radius 
	$(\rho_2^{-1}-\rho_1^{-1})/2$ centered  
	on the real axis of the complex plane at 
	$-(\rho_2^{-1}+\rho_1^{-1})/2 + j0$.

Then $\Gcl$ has approximately-finite memory on $\inpset$.
\end{proposition}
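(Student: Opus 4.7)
The plan is to reduce Proposition \ref{prop:circle_criterion} to the main result of \cite{sandberg_approximately_1994}, which asserts that a Lure feedback interconnection of an LTI component with a sector-bounded static nonlinearity has approximately-finite memory on a bounded input class whenever a circle-criterion-type condition is satisfied. Hypotheses (i) and (ii) are precisely the two classical sector forms of that criterion, so the proof reduces to verifying the hypotheses of Sandberg's theorem in each case and invoking it as a black box. Well-posedness of $(I+GH)^{-1}$ on $\linf(\setrealp)$, and hence of $\Gcl$ on $\inpset$, has already been established in the discussion preceding the proposition.

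For case (ii), I would apply the standard loop transformation with $\alpha = (\rho_1+\rho_2)/2$, replacing $G$ by $\tilde G(s) = G(s)/(1+\alpha G(s))$ and $h$ by $\tilde h(v) = h(v) - \alpha v$. The slope condition \eqref{eq:slope_condition} then forces $\tilde h$ into the symmetric sector $[-r, r]$ with $r = (\rho_2-\rho_1)/2$, so $\tilde h$ is globally Lipschitz with constant at most $r$. The assumption that the Nyquist locus of $G$ lies inside the disk of radius $(\rho_2^{-1}-\rho_1^{-1})/2$ centered at $-(\rho_2^{-1}+\rho_1^{-1})/2$ is the standard geometric restatement of the small-gain condition $\|\tilde G\|_\infty < 1/r$ together with exponential stability of $\tilde G$. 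Sandberg's theorem then applies to the transformed pair, and since $\Gcl$ is invariant under this transformation the conclusion follows.

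For case (i), I would take $\alpha = \rho_1 \ge 0$, so that $\tilde h$ lies in the positive sector $[0, \rho_2-\rho_1]$. The transformed plant $\tilde G$ is exponentially stable: its poles are the zeros of $1+\rho_1 G(s)$, and $\mathrm{Re}[1 + \rho_1 G(j\omega)] \ge 1$ together with strict properness of $G$ and the absence of right-half-plane poles of $G$ rules out zeros in the closed right half-plane. Positive realness of $\tilde G$ is inherited from that of $G$. This is the passivity variant of the circle criterion, again covered by Sandberg's theorem.

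The main obstacle is the passage from the $L_2$-type closed-loop stability that the circle criterion directly yields to the pointwise-in-time approximately-finite memory property of Definition \ref{def:approximately_finite}: one must bound the response of the closed loop to the truncated-off tail $(I - W_{t,\eta})u$ of the input uniformly in $t \ge 0$ and $u \in \inpset$ as $\eta \to \infty$. The key ingredients are an exponential decay bound on the impulse response of $\tilde G$, the global Lipschitz bound on $\tilde h$ inherited from \eqref{eq:slope_condition}, and uniform boundedness of inputs in $\inpset$. This construction is precisely the core technical contribution of \cite{sandberg_approximately_1994} and is invoked here as a black box rather than redone.
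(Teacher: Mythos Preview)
Your high-level plan---reduce to Sandberg's circle-criterion result in \cite{sandberg_approximately_1994}---matches the paper's. The gap is in what that result actually delivers. You treat it as if it hands you $\Gcl = (I+GH)^{-1}G \in \mathcal{A}(\inpset)$ directly, but as used in the paper, Theorem~1 of \cite{sandberg_approximately_1994} yields approximately-finite memory for the map $GH(I+GH)^{-1}$, not for $\Gcl$. These are different operators: one has the nonlinearity acting before $G$ in the loop, the other after. The paper therefore does two extra steps you omit. First, the identity $GH(I+GH)^{-1} = I - (I+GH)^{-1}$ is used to transfer the property to $(I+GH)^{-1}$, now on the enlarged input set $\inpset'$ with bound $\xi' = \|G\|_1\,\xi$. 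Second, $\Gcl$ is written as the cascade of $G$ (mapping $\inpset$ into $\inpset'$) followed by $(I+GH)^{-1}$, and the cascade preservation result \cite[Theorem~3]{sandberg_approximately-finite_1992} is invoked; that theorem requires uniform continuity of $(I+GH)^{-1}$ on $\linf(\setrealp)$, which the paper obtains from \cite{sandberg_results_1965}. Your claim that ``$\Gcl$ is invariant under the loop transformation'' is correct but does not bypass this issue, since the operator Sandberg's theorem controls is still not $\Gcl$ after the shift.

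The explicit loop transformations you perform for (i) and (ii) are a valid route to checking Sandberg's hypotheses, but they are also unnecessary: hypotheses (i) and (ii) are already stated in the form that \cite[Theorem~1]{sandberg_approximately_1994} consumes, so the paper simply cites it without any preliminary shift. That part of your write-up is extra work rather than a different idea.
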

\begin{proof}
	Let $\inpset'$ be defined similarly to $\inpset$, but 
	with $\xi' = \|G\|_1 \xi$. Under our
	assumptions, \cite[Theorem 1]{sandberg_approximately_1994} 
	ensures that the map $GH(I+GH)^{-1} \in 
	\mathcal{A}(\inpset')$ (for simplicity, we denote operators
	and their restrictions by the same symbols).
	But since $GH(I+GH)^{-1} = I - (I+GH)^{-1}$, it 
	follows from direct application of the inequality 
	\eqref{eq:fading_memory} that $(I+GH)^{-1}$ is also in
	$\mathcal{A}(\inpset')$.
%	ensures\footnote{Using the fact that 
%	$GH(I+GH)^{-1} = I - (I+GH)^{-1}$.} 
%	that $(I+GH)^{-1} \in \mathcal{A}(\inpset ')$, 
	%with respect to the metric of that space.  
	Thus, $\Gcl$ is the cascade interconnection of 
	$(I+GH)^{-1}\in \mathcal{A}(\inpset ')$ with 
	$G \in \mathcal{A}(\inpset)$. Since
	$G u \in \inpset '$ for all $u \in \inpset$, $\Gcl$ can 
	be shown to be in $\mathcal{A}(\inpset)$ using the
	cascade	interconnection result\footnote{This result
	requires $(I+GH)^{-1}$ to be uniformly continuous on 
	$\linf(\setrealp)$, which can be shown by means of 
	\cite[Corollary 3a]{sandberg_results_1965}.}
	\cite[Theorem 3]{sandberg_approximately-finite_1992}.
\end{proof}

%Using results from \cite{sandberg_uniform_1993}, it can be
%shown that if $\Gcl$ satisfies Proposition 
%\ref{prop:circle_criterion}, then the restriction of $\Gcl$
%to a certain\footnote{See \cite[Section II.E]{sandberg_uniform_1993}.} subset
%$\mathcal{S}\subset\inpset$ can be uniformly approximated
%arbitrarily well by common block-oriented identification
%structures such as dynamic feedforward neural networks and
%parallel Wiener systems. The functions in the set 
%$\mathcal{S}$ include bounded piecewise-Lipschitz continuous
%functions, which include many signals used in practice.

\subsection{Linear output feedback}

If $h(v)$ possesses regions of negative conductance,
i.e., $\rho_1<0$, and $G(s)$ fails to satisfy the circle
condition (ii) of Proposition \ref{prop:circle_criterion},
the interconnection of Definition \ref{def:model_class}
might fail to belong to $\mathcal{A}(\inpset)$ for any 
$\inpset$. In fact, we can argue that is the case for the
two examples of Section \ref{sec:examples}. The 
Fitzhugh-Nagumo model, for instance, does not satisfy
\eqref{eq:fading_memory} for the input 
$i_p(t) = (\mu(t-t_1) - \mu(t-t_2))\xi/2$, with $\mu$ the
Heaviside function and $t_2>t_1>0$. With zero initial
conditions, this input can be used to drive the state
of \eqref{eq:gen_vdp_ss}-\eqref{eq:fhn_nl} 
away from an unstable equilibrium at the origin and towards
a stable limit-cycle. As a consequence, for any constant
$\eta$, \eqref{eq:fading_memory} cannot hold for arbitrarily 
large $t>0$ and small $\epsilon$. A similar
argument can be used for the Chua circuit, where the
limit-cycle is replaced with a chaotic attractor.

%When the circle condition fails
%to be satisfied, examples can be constructed such that the
%output of the interconnection subject to a periodic input
%contains subharmonic components (for an example, see
%\cite[pp. 882-883]{sandberg_results_1965}). 

The feedback law \eqref{eq:linear_fb}
can be used to endow the closed-loop operator with the
approximately-finite memory property.
%as well as the PISPO property. 
To see this, note that the closed-loop feedback system
with input $k \vref$ and output $v$ can be described
by the negative feedback interconnection of $G(s)$
with the static nonlinearity
\begin{equation}
	\label{eq:nl_fb}
		h_k(v) = h(v) + kv 
\end{equation}
so that now we have
\[
	\rho_1 + k \le \frac{h_k(v_1)-h_k(v_2)}{v_1-v_2} \le \rho_2 + k
\]
for all $v_1\neq v_2$. 

%\begin{figure}
%	\centering
%	\begin{tikzpicture}
%		\node (v_dyn) at (0,0) [fblock]{$G$};		
%		\node (int_dyn) [fblock, below= 0.3 of v_dyn,
%		align=left]{$h_k(v)$};
%		\node (sum1) [sum,left=of v_dyn,
%					label={175:$+$},label={-85:$-$}] {};		
%		\coordinate[left= 1 of sum1] (input);
%		\coordinate[right=of v_dyn] (split1);	
%		\coordinate[right=of split1] (output);		
%		%Connections
%		\draw[-latex] (sum1) -- (v_dyn);
%		\draw[-latex] (input)--(sum1) node[left,pos=0]
%		{$k\vref(t)$};
%		\draw[-latex] (v_dyn) -- (split1) -- (output) 
%		node[right,pos=1]{$v(t)$};
%		\draw[-latex] (split1) |- (int_dyn);
%		\draw[-latex] (int_dyn) -| (sum1);
%	\end{tikzpicture}
%	\caption{Closed-loop system with $h_k(v) = h(v) + kv$.}
%	\label{fig:lure_fb}
%\end{figure}

%\begin{proposition}
%Any $k$ such that $k > \rho_1$ endows the closed-loop
%circuit with the following input-output stability
%properties: the closed-loop system has approximately-finite
%memory, and if the input approaches a periodic function of 
%time with some period $T$ as $t\to\infty$, then
%the output approaches a periodic function of time with the 
%same period $T$, with this periodic function independent of 
%the initial conditions and dependent on the input only 
%through its periodic part.
%\end{proposition}

Now, it is possible to make $\rho_1 + k \ge 0$ by choosing 
$k>0$ large enough. Let $H_k$ denote the operator defined by
$(H_k v)(t) = h_k(v(t))$, and consider the new closed-loop
operator $\Gclk = (I+GH_k)^{-1}Gk$. Now (i) of Proposition  
\ref{prop:circle_criterion} is satisfied, and we have 
$\Gclk \in \mathcal{A}(\inpset)$.
%, and that $\Gclk$ has the PISPO property. 

\section{A feedback identification method}
\label{sec:feedlin} 

In this section, we show that it is possible to 
decouple the problem of identifying a nonlinear
system belonging to the model class of Definition 
\ref{def:model_class} into a nonlinear static 
identification stage and a dynamic mildly
nonlinear identification stage. We work with the 
following simplifying assumption.

\begin{assumption}[Simplified setup]
	\label{asp:ideal_case} 
	The model class is described by Definition 
	\ref{def:model_class}. In addition,	$h$ is given by
	\begin{equation}
	\label{eq:asp_h} 
		h(v) = a_1 v + \sum_{j=2}^{J^*} a_j \phi_j(v)
	\end{equation}
	where $a_j \in \setreal$ and
	$J^* \in \setnat \cup \{\infty\}$. We assume the
	$\phi_j$ are known linearly independent functions which
	are Lipschitz continuous on every bounded subset of 
	$\setreal$. The feedback
	law	$i=k(\vref-v)$, with $k+\rho_1 > 0$, is implemented 
	with an ideal analog circuit. The signal $v_r$ 
	is known, and the signals $i_m = i+\noii$ and $v_m = v+\noiv$ 
	are observed, where $\noii$ and $\noiv$ are independent 
	Gaussian coloured zero-mean noise terms with finite variances. 
	Figure \ref{fig:simple_setup} with the block $K = k(v_r-v)$ 
	gives a representation of this setup.
\end{assumption}

\begin{figure}%[H]
	\centering
	\begin{tikzpicture}
		\node (v_dyn) at (0,0) [fblock]{$G(s)$};		
		\node (int_dyn) [fblock, below= 0.3 of v_dyn,
		align=left]{$h(\cdot)$};
		\node (sum1) [sum,left= 0.5of v_dyn,
						label={175:$+$},label={-85:$-$}] {};
		%\node (sum2) [sum,left=2.5 of sum1,
		%				label={175:$+$},label={-85:$-$}] {};	
		\node (cont) [fblock, left= 1 of sum1,		
		align=left]{$K$};
		\coordinate[left= 1 of cont] (input);	
		\coordinate[right=of v_dyn] (output);		
		\coordinate[below=2 of output] (split_v);
		\coordinate[left=0.5 of sum1] (split_i);
		\node (sum3) [sum,above=0.5of output,
						label={175:$+$},label={-95:$+$}] {};
		\node (sum4) [sum,above=0.5of split_i,
						label={175:$+$},label={-95:$+$}] {};
		\coordinate[above=0.5 of sum4] (meas_i);
		\coordinate[above=0.5 of sum3] (meas_v);
		\coordinate[left=1 of sum4] (noise_i);
		\coordinate[left=1 of sum3] (noise_v);
		%Connections
		\draw[-latex] (input) -- (cont) 
						node[below,pos=0]{$v_r(t)$};
		%\draw[-latex] (sum2) -- (cont);
		\draw[-latex] (cont) -- (sum1)
						node[below,pos=0.5]{$i(t)$};		
		\draw[-latex] (sum1) -- (v_dyn);
		\draw[-latex] (v_dyn) -- (output) 
						node[right,pos=1]{$v(t)$} --
						(split_v) -| (cont.south);
		\draw[-latex] (output) |- (int_dyn);
		\draw[-latex] (int_dyn) -| (sum1);
		\draw[-latex] (output) -- (sum3);
		\draw[-latex] (sum3) -- (meas_v)
						node[right,pos=1]{$v_m(t)$};
		\draw[-latex] (split_i) -- (sum4);
		\draw[-latex] (sum4) -- (meas_i)
						node[right,pos=1]{$i_m(t)$};
		\draw[-latex] (noise_i) -- (sum4) 
						node[above,pos=0]{$\noii(t)$};
		\draw[-latex] (noise_v) -- (sum3) 
						node[above,pos=0]{$\noiv(t)$};			
	\end{tikzpicture}
	\caption{Simplified output noise setup. $K=k(v_r-v)$ in the
	static identification stage, and $K=\kappa(v_r,v)$ from 
	\eqref{eq:feedlin} the dynamic stage.}
	\label{fig:simple_setup}
\end{figure}
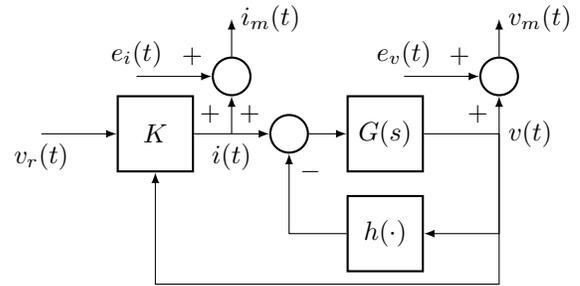

\subsection{Static identification stage}
\label{sec:ideal_nl}

We start by introducing the following concept.

\begin{definition}
	We define the inverse static input-output characteristic 
	by
	\begin{equation}
	\label{eq:inv_io} 
		\Iinf(v) = \frac{1}{G(0)} v  + h(v)	
	\end{equation}
	where $G(0)>0$ by assumption.
\end{definition}
\vspace{0.5em}

The characteristic $\Iinf(v)$ gives the (unique)
constant input required to establish an equilibrium 
at each constant $v$.
%and can be understood as the nonlinear
%inverse DC gain of the system.
Notice that under Assumption 
\ref{asp:ideal_case}, estimating $\Iinf$ effectively 
amounts to estimating the nonlinear terms of $h$ in
\eqref{eq:asp_h}.
%that are associated to the nonlinear basis functions 
%$\{\phi_2(v),\phi_3(v),\dotsc\}$.
%
%{\color{red}
%An obvious procedure to estimate $\Iinf(v)$ would be to
%apply a constant value of $i=\bar{\imath}$ at the input of
%the system, wait for the output to settle to equilibrium,
%and then repeat this same experiment for different values of
%$i$. For the nonlinear systems we are considering,
%this is not a viable procedure (the system might not
%settle to a constant output). We now show that the
%output feedback \eqref{eq:linear_fb} solves that
%problem and allows $\Iinf$ to be estimated.
%}

To estimate $\Iinf$, we need to stabilize the system at
different steady-states $\bar{v}$. We ensure this by means 
of the output feedback \eqref{eq:linear_fb}. The equilibrium 
of the system must satisfy
\begin{equation}
	\label{eq:fb_equilibrium} 
	-\frac{1}{G(0)} v + k \vref
	= h(v) + k v, 
\end{equation}
By assumption, the right-hand side of
\eqref{eq:fb_equilibrium} is monotonically increasing. Since 
$h$ is continuous, it follows that \eqref{eq:fb_equilibrium} 
has a single solution $\vbar$ for every $\vrbar$. The fact 
that the system settles to the unique $\vbar$ when subject 
to a constant $\vrbar$ is guaranteed by the approximately-finite
memory property \cite[Theorem 2]{sandberg_approximately-finite_1992}.
This can be alternatively be shown using the circle criterion 
\cite[Theorem 7.2]{khalil_nonlinear_2002}.

%\begin{proposition}\cite[Theorem 2]{sandberg_approximately-finite_1992}
%	\label{prop:equilibrium} 
%	Assume that $F:\inpset\to\linf(\setrealp)$ belongs to 
%	$\mathcal{A}(\inpset)$, is time invariant, and is 
%	uniformly continuous on $\inpset$. Then 
%	$\lim_{t\to\infty}(Fs)(t)$
%	exists in $\setreal$ whenever $s\in \inpset$ and 
%	$\lim_{t\to\infty} s(t)$ exists in $\setreal$.
%\end{proposition}
%
%By Proposition \ref{prop:circle_criterion}, $\Gclk$
%satisfies the conditions of Proposition
%\ref{prop:equilibrium}, and 
%$\lim_{t\to\infty} (G_{c\ell,k}\vref) = \vbar$ when
%$\lim_{t\to\infty} \vref(t) = \vrbar$, since the unique
%solution $\vbar$ of \eqref{eq:fb_equilibrium} is the only
%possible constant value the output can tend to.

We can now discuss how to estimate $\Iinf$. A simple
procedure begins by choosing a sufficiently large $k>0$ and
a grid of $M$ constant values for $\vrbar$. Assume this grid
is contained in the vector $\bar{V}_\text{r}$. For each 
$m=1,\dotsc,M$, we apply the input $\bar{V}_\text{r}[m]$ to the
closed-loop system and wait for the system to settle to a
corresponding output equilibrium $\bar{V}[m]$. This yields
(as $t\to \infty$) an $M$-dimensional vector of true 
output steady-state values $\bar{V}$. In practice, the 
noise assumptions allow us to obtain consistent 
estimates $\hat{v}$ for $\vbar$ and $\hat{\imath}$ 
for $\Iinf(\vbar)$ by averaging the measurements, 
\begin{equation}
	\label{eq:meas_estimators}
	\hat{v}_N = \frac{1}{N} \sum_{n=1}^N v_m(nT_s),
	\quad 
	\hat{\imath}_N = \frac{1}{N} \sum_{n=1}^N i_m(nT_s) 
\end{equation}
where $T_s$ is the measurement sampling period and $N$ is the
number of samples. This yields estimate vectors $\hat{V}_N$
and $\hat{I}_N$.

%Note that the
%distance between the voltage samples recorded in $\bar{V}$
%will not be uniform. 
%{\color{red}A vector of measured values of the inverse static 
%input-output map is obtained by computing 
%$\bar{I}_\infty = k(\bar{V}_\text{r}-\bar{V})$.}
%The larger $k$ is, the closer the grid $\bar{V}$ will be to
%the grid $\bar{V}_\text{r}$.

%Note that the gain $k$ that satisfies Assumption
%\ref{asp:ideal_case} can be always found without knowledge 
%of $\rho_1$: if at any point during the above procedure the
%system does not settle to equilibrium, or if the 
%measured $I_\infty$ contains regions of negative...

Considering Assumption \ref{asp:ideal_case}, a natural estimator 
for $\Iinf$ is
\begin{equation}
	\label{eq:estimated_i_inf} 
	\Iinfh(v) = w_1 v + \sum_{j=2}^J w_j \phi_j(v)
\end{equation}
where $w_j$ are the estimator parameters, and $J\in\setnat$ is
such that $J \le M$. In order to estimate these parameters, we 
construct a matrix $\Phi_{N,J} \in \setreal^{M \times J}$ whose 
$m^{\text{th}}$ row is given by						
\begin{equation}
	\label{eq:regressors}
	\big(\hat{V}_N[m], \phi_2(\hat{V}_N[m]),\dotsc,
	\phi_J(\hat{V}_N[m]) \big) 
\end{equation}

Assume that $\Phi_{N,J}$ has full rank. This can be 
accomplished by choosing a sufficiently wide and fine grid
for the elements of $\bar{V}_\text{r}$. Then, a parameter
estimate $\hat{W} = (\hat{w}_1,\dotsc,\hat{w}_J)^T$ is 
obtained by solving
\begin{equation}
	\label{eq:regression} 
	\min_W \quad \sum_{m=1}^M 
	\left(	
	\hat{I}_{N}[m] - \Iinfh(\hat{V}_N[m])
	\right)^2	
\end{equation}
which yields
\begin{equation}
	\label{eq:ls_parameters} 
	\hat{W}_{N,J} = (\Phi_{N,J}^T\Phi_{N,J})^{-1}
	\Phi_{N,J}^T \hat{I}_{N}
\end{equation}

We thus have that, as $N\to \infty$ and $J \to J^*$, as long
as $\Phi_{N,J}$ has full column rank 
for all $J$,  $\Iinfh(v)$ converges to $\Iinf(v)$,
%the estimated $\Iinfh$ will satisfy 
%$\sup_v|\Iinfh(v) - \Iinf(v)| = 0$ 
and each $\hat{w}_j$ converges to $a_j$ for $j=2,3,\dotsc,J^*$
(we drop the subscripts $N$ and $J$ of $\hat{w}_j$
for clarity).

\subsection{Dynamic identification stage}
\label{sec:ideal_lin} 

The main idea in the dynamic identification stage is to use
the input
\begin{equation}
	\label{eq:feedlin} 
	%\begin{split}
	\inp = \kappa(v,\vref) \triangleq
		 k(\vref-v) + \sum_{j=2}^J \hat{w}_j \phi_j(v)
	%\end{split}
\end{equation}
so as to linearize the system by feedback. 

\begin{assumption}
\label{asp:ideal_feedback} 
The feedback law \eqref{eq:feedlin} is implemented with an 
ideal analog circuit. The setup of the problem is represented 
by Figure \ref{fig:simple_setup}, with $K = \kappa(v,\vref)$ 
given by \eqref{eq:feedlin}.
\end{assumption}

From the analysis in the previous section, as $N\to \infty$ and 
$J \to J^*$, the identification problem becomes one of identifying 
a linear system with input $\vref$, output $v$, and an output error 
structure. The ground truth model, 
at those limits, is given by $G_k(s) = k G_a(s) / ( 1 + k G_a(s))$, 
with $G_a(s) = G(s)/(1+a_1G (s))$. The system $G_a(s)$ lumps 
together
the term $a_1 v$ and the transfer function $G(s)$,
which are indistinguishable from each other from the input-output 
perspective. The resulting linear identification problem is a 
well-known one for which consistency guarantees can 
be obtained with a variety of methods \cite{ljung_system_1999}.

In practice, obviously, $N$ and $J$ will be finite, and the
nonlinearity will not be perfectly canceled by feedback. In
that case, identifying the closed-loop system from $\vref$ 
to $v$ amounts to identifying a mildly nonlinear system that 
has an approximately-finite memory and is subject to output noise.
The Best Linear Approximation (BLA) framework 
\cite{schoukens_identification_2005} ensures in this setting 
that, by using linear identification methods which are based 
on minimizing a squared sum of output residuals, we can obtain 
(asymptotically) an optimal approximation of the nonlinear system. 
Optimality, in this case, is defined with respect to the assumed 
input class \cite{schoukens_identification_2017}. Furthermore, due 
to the fact that operators with approximately-finite memory map 
periodic inputs to asymptotically periodic outputs 
\cite[Theorem 9]{sandberg_approximately-finite_1992},
by choosing periodic exciting signals, we can mitigate noise 
effects in the output by averaging the signal $v_m$ over 
different periods.

%It can be shown that if
%a system operator has the approximately-finite memory
%property and it is uniformly continuous on 
%$\linf(\setrealp)$, then it does belong to the PISPO class.
%This is shown in  in a more general
%context concerning almost-periodic inputs. The PISPO
%property of systems that satisfy the circle conditions of
%Proposition can alternatively be shown directly with
%\cite[Theorem 4]{sandberg_results_1965}. 

Given a best linear estimate $\hat{G}_k(s)$, to recover the 
estimate of the original nonlinear system with input $i$ and 
output $v$, we first compute 
\begin{equation}
	\label{eq:original_G} 
	\hat{G}_a(s) = 
	\frac{1}{k}\frac{\hat{G}_k(s)}{1-\hat{G}_k(s)},
\end{equation}
which is necessary to account for the $k(\vref-v)$ term in
\eqref{eq:feedlin}. The identified nonlinear system is then
given by interconnecting, in negative feedback, the transfer
function $\hat{G}_a(s)$ and the nonlinearity 
$\hat{h}(v) = \sum_{j=2}^J\hat{w}_j \phi_j(v)$.

\section{Simulations with a realistic setup}
\label{sec:numerical} 

In a more realistic identification setting, the user-defined 
feedback loop around the physical system is implemented in
discrete-time, and output measurement noise is fed 
back into the system dynamics. 

In this section, using numerical simulations, we naively 
apply the procedure described in Section \ref{sec:feedlin} 
to identify the two systems from section \ref{sec:examples}, 
assuming the realistic setup of Figure \ref{fig:real_setup}.
We assume that $\noiv$ and $\noii$ are given by white Gaussian 
noise with the same variance, denoted by $\sigma$. With this, 
we aim to provide a proof of concept that the method still 
performs well in a realistic scenario. 

%Notice that
%for small $\lambda_v$ and constant $\vref = \vrbar$, the system 
%in Figure \ref{fig:lure_noise}, with $K=k(\vref-v_m)$, behaves as
%if it were being driven by $\noiv$. If $\noii$ has low power,
%then the whole interconnection behaves in an almost-linear
%fashion. Since the interconnection is driven by a zero-mean
%noise signal, we should expect $v_m$ to converge to

\begin{figure}%[H]
	\centering
	\begin{tikzpicture}
		\node (v_dyn) at (0,0) [fblock]{$G(s)$};		
		\node (int_dyn) [fblock, below= 0.3 of v_dyn,
		align=left]{$h(\cdot)$};
		\node (sum1) [sum,left= 0.5of v_dyn,
						label={175:$+$},label={-85:$-$}] {};
		\node (cont) [fblock, left= 2.5 of sum1,		
		align=left]{$K$};
		\node (zoh) [fblock,left=0.7of sum1]{\small ZOH};
		\coordinate[left= 1 of cont] (input);	
		\coordinate[right=of v_dyn] (output);		
		\coordinate[right=0.5 of cont] (split_i);
		
		\node (sum3) [sum,below=1.75 of split_i,
						label={95:$+$},label={5:$+$}] {};
		\node (sum4) [sum,above=0.5of split_i,
						label={175:$+$},label={-95:$+$}] {};
	
		\coordinate[right=2 of sum3] (sampler1);
		\coordinate[right=1.5 of sum3] (sampler2);
		\coordinate[above=0.5 of sum4] (meas_i);
		\coordinate[left=2 of sum3] (meas_v);
		\coordinate[left=1 of sum4] (noise_i);
		\coordinate[above=0.5 of sum3] (noise_v);
		%Connections
		\draw[-latex] (input) -- (cont) 
						node[below,pos=0]{$\vref(nT_s)$};
		\draw[-latex] (cont) -- (zoh);
		\draw[-latex] (zoh) -- (sum1)
						node[below,pos=0.5]{$i(t)$};		
		\draw[-latex] (sum1) -- (v_dyn);
		\draw (v_dyn) -- (output) node[above,pos=0.6]{$v(t)$} 
						|- (sampler1);
		\draw (sampler1) -- ($(sampler1)+(-0.5,0.5)$);
		\draw[-latex] (sampler2) -- (sum3);
		\draw[-latex] (sum3) -| (cont);
		\draw[-latex] (sum3) -- (meas_v) node[above,pos=1]{$v_m(nT_s)$};
		\draw[-latex] (output) |- (int_dyn);
		\draw[-latex] (int_dyn) -| (sum1);
		\draw[-latex] (split_i) -- (sum4);
		\draw[-latex] (sum4) -- (meas_i)
						node[right,pos=1]{$i_m(nT_s)$};
		\draw[-latex] (noise_i) -- (sum4) 
						node[above,pos=0]{$\noii(nT_s)$};
		\draw[-latex] (noise_v) -- (sum3) 
						node[above,pos=0]{$\noiv(nT_s)$};			
	\end{tikzpicture}
	\caption{Realistic identification setup. $K=k(\vref-v)$ in
	the static static identification stage, and 
	$K=\kappa(\vref,v)$	from \eqref{eq:feedlin} in the dynamic
	stage. The block ZOH is a standard zero-order hold.}
	\label{fig:real_setup}
\end{figure}
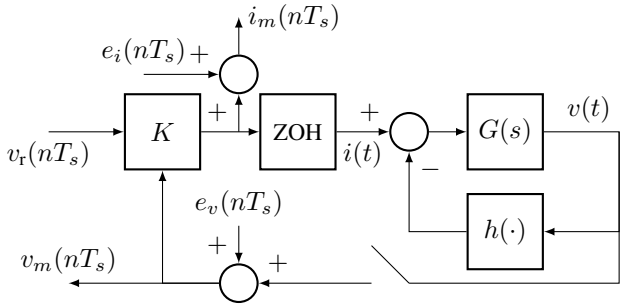

We briefly describe the simulation procedure.
Given a vector $\bar{V}_\text{r}$, each of the $M$ experiments
of the static identification stage was simulated by numerically 
integrating the dynamics of the scheme shown in Figure 
\ref{fig:real_setup}, with $K$ given by $k(\vref - v)$,
with the input $\vref(t) = \bar{V}_\text{r}[m]$, $t\ge 0$,
and with zero initial conditions\footnote{The simulations
were performed in Matlab's Simulink$^{\text{TM}}$ using the
numerical integration routine \texttt{ode15s} with a 
maximum step set to $10^4$ seconds and relative/absolute 
tolerances set to $10^{-6}$.}. Numerical integration was
carried out for $100$ seconds, which was sufficient to see
\eqref{eq:meas_estimators} converge.

To generate data for the dynamic identification stage, we 
performed $R$ simulations$^\text{4}$ corresponding to $R$ 
realizations of two periods the random-phase multisine 
inputs given by
\[
	\vref(nT_s) = \sum_{\ell=-N_f}^{N_f} u_\ell 
	\sin(\tfrac{2\pi}{N}	\ell n + \theta_\ell), \quad\quad n = 0,1,2,\dotsc
\]
where the $\theta_\ell$ are random variables uniformly 
distributed over $[0,2\pi[$, $N = T/T_s$ is the number
of samples per signal period $T$, and $N_f = f_{\max} T < N/2$
is the harmonic number corresponding to the largest
frequency in the signal, $f_{\max}$. The coefficients 
$u_\ell$ are chosen such that $u_0=0$ and $u_\ell = \bar{u}$,
with $\bar{u}$ a constant used to set the input RMS level. 
Simulations were carried out by numerically integrating the
dynamics of the scheme shown in Figure \ref{fig:real_setup}, 
with zero initial conditions, and with $K$ given by 
$\kappa(\vref,v)$ in \eqref{eq:feedlin}. 

Using the generated data, a continuous-time transfer 
function $\hat{G}_k(s)$ was estimated using the 
off-the-shelf Matlab System Identification 
Toolbox\footnote{Toolbox version 9.9, Matlab version R2018b.} 
routine \texttt{tfest}\footnote{The function \texttt{tfest} was 
used with standard settings. 
The routine initializes parameters through the Instrument 
Variable (IV) method, and updates the parameters by minimizing 
a weighted prediction error norm using a nonlinear least-squares 
search method.}. The number of poles and zeros of the identified
transfer were constrained to be the same as those of the
ground truth ones. The identified linear model is recovered
as in \eqref{eq:original_G}.

The results to be discussed next were obtained with data
generated using the parameters in Table \ref{tab:simulations}.
The signal-to-noise ratio (SNR) value refers to ratio of the
average power of the output of the noiseless system in the
dynamic identification stage, and the noise variance $\sigma^2$.

\subsection{Fitzhugh-Nagumo circuit}

Using the basis functions $\phi_j(v) = v^j$, $j=2,3$, 
Figure \ref{fig:fhn_IV} shows that assuming a realistic
setting results in a small error $(\Iinf-\Iinfh)(v)$. The error 
remains roughly the same when the noise variance is 
increased by a factor of $10$. 

Figure \ref{fig:fhn_validation} shows validation of the
identified model in closed-loop. For validation purposes, the
mean of the input $i(t)$ was set to $-1.5$, which puts the 
FHN system in the excitable regime, and results in a
characteristic spiking behavior. It can be seen that the error 
is kept low for most of the time, except at moments when the
model ``misses'' a spike. These misses occur due to the
ultrasensitivity of excitable systems with respect to their
inputs.
 
 \begin{table}
\centering
\caption{Parameters used in the generation of data.}
\label{tab:simulations} 
\begin{tabular}{|c|c|c|c|c|c|c|c|}
\hline
$\;$ & $T_s$ & k &$ f_{\max} $ & $R$ & $T$  & $\sigma$ & SNR  \\
\hline 
FHN & $10^{-3}$ s & $1.5$ & 100 Hz & $5$ & $500$ s & 0.01 & 
$40$ dB \\ 
\hline 
Chua & $10^{-3}$ s & $5$ & 100 Hz&$5$ & $500$ s & 0.01 & $40$ dB \\ 
\hline 
\end{tabular} 
\end{table}

\begin{figure}
	\centering
	\psfrag{voltage}[cc]{\small $v [V]$}
	\psfrag{current}{\small $\Iinf [A]$}
	\psfrag{error}[cb]{\small $\Iinf-\Iinfh$}
	\includegraphics[scale=0.5]{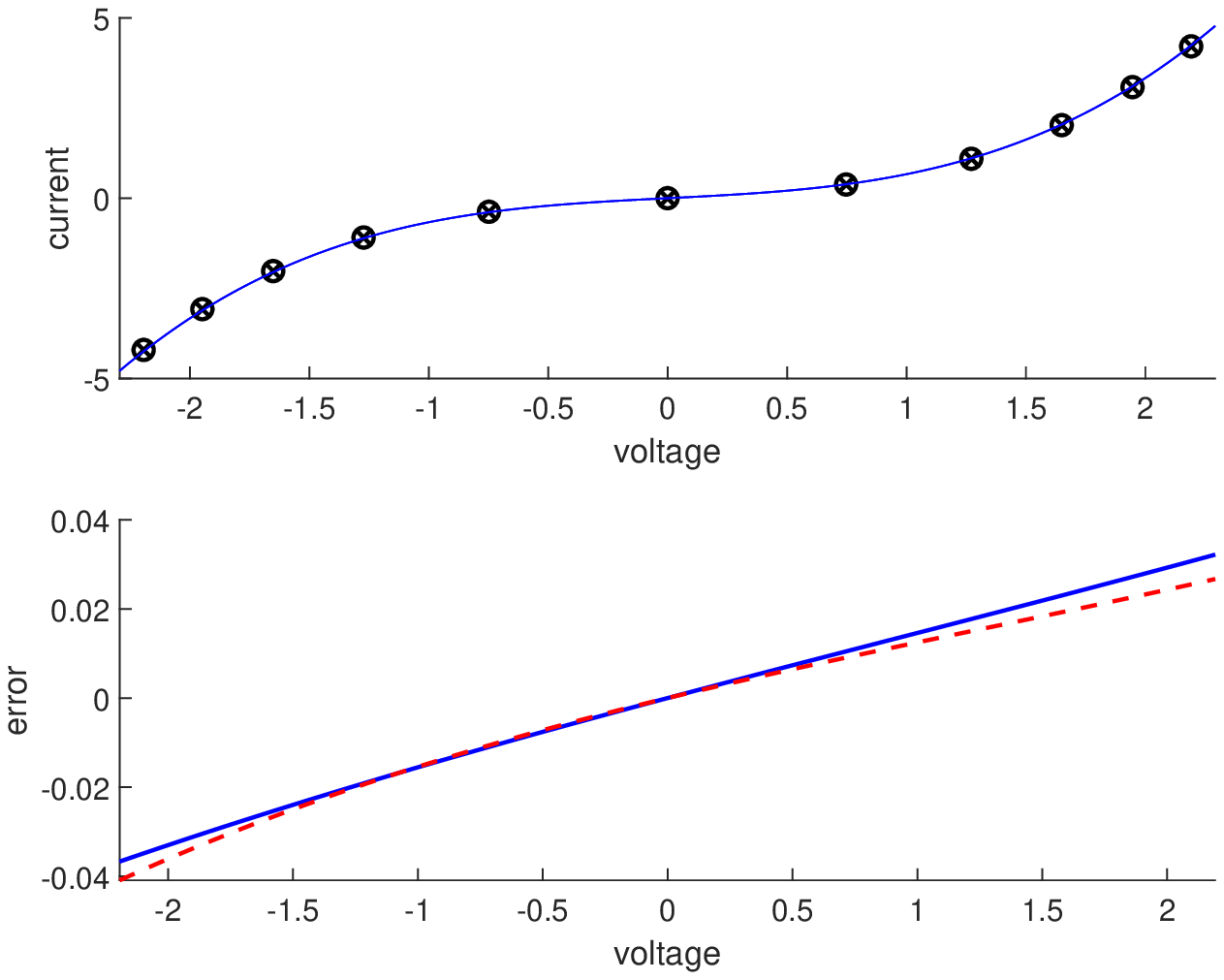}
	\caption{Estimation of $\Iinf$ for the 
	FHN	circuit. Top: ground truth 
	$\Iinf$ (line), estimates $(\hat{v},\hat{\imath})$ 
	with $\sigma = 0.01$ (crosses) and 
	$\sigma = 0.1$ (circles). Bottom: error 
	$\Iinf-\Iinfh$ with $\sigma = 0.01$
	(solid) and $\sigma = 0.1$ (dashed).}
	\label{fig:fhn_IV}
\end{figure}

\begin{figure}
	\centering
	\psfrag{volts}[bc]{\small $[V]$}
	\psfrag{current}[bl]{\small $\inp\;[A]$}
	\psfrag{error}[bl]{\small Error}
	\psfrag{time}[cc]{\small $t\;[s]$}
	\includegraphics[scale=0.5]{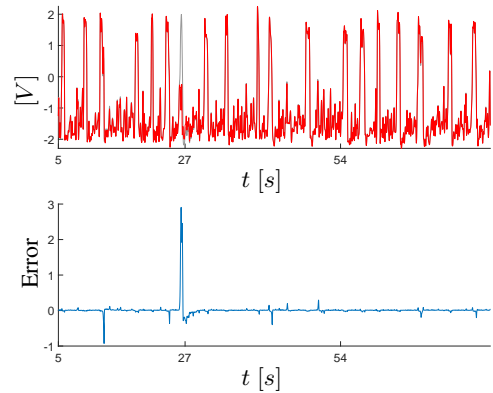}
	\caption{Validation of the identified FHN circuit.
	Top: Ground truth model output $v(t)$ (gray) and
	identified model output $\hat{v}(t)$ (red). Middle: 
	Output error $v(t) - \hat{v}(t)$. NRMSE $\approx0.84$ 
	for	the interval shown.	The NRMSE increases to about
	$0.97$ when only data from $t\ge 30$ is taken into 
	account: most of the error comes from the 
	``missed spike'' around $27$ s.}
	\label{fig:fhn_validation}
\end{figure}

\subsection{Chua's circuit}
	To 	capture the nonlinear components of a piecewise-linear 
	nonlinearity such as \eqref{eq:chua_diode},	we use the 
	basis functions $\phi_2(v) = \max\{0,v-1\}$ and 
	$\phi_3(v) = \max\{0,-(v+1)\}$. Figure \ref{fig:chua_IV}
	shows the resulting nonlinearity estimation error. Again, 
	a tenfold increase in measurement noise does not severely
	affect the error.
	
%	\begin{equation}
%		\label{eq:chua_est_nl} 
%		\Iinfh(v) = w_1 v + w_2 \max\{0,v-1\} 
%		+ w_3 \max\{0,-(v+1)\}
%	\end{equation}
	
%	{\color{red}
%	A higher gain of $k=10$ is needed to stabilize the 
%	system so that the output settles to equilibrium
%	(for low $k$, the characteristic $\Iinf$ is
%	hysteretic).} 
	
	\begin{figure}
	\centering
	\psfrag{voltage}[cc]{\small $v [V]$}
	\psfrag{current}{\small $\Iinf [A]$}
	\psfrag{error}[cb]{\small $\Iinf-\Iinfh$}
	\includegraphics[scale=0.5]{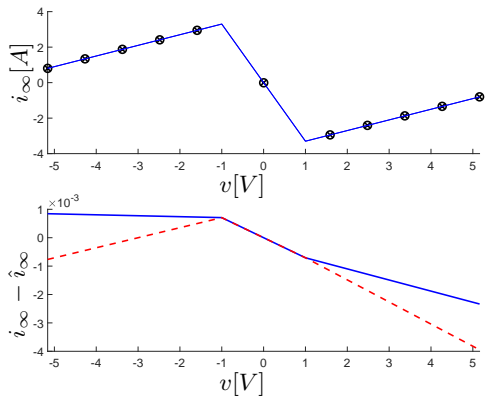}
	\caption{Estimation of $\Iinf$ for the 
	Chua circuit. Top: ground truth 
	$\Iinf$ (line), estimates $(\hat{v},\hat{\imath})$  
	with $\sigma = 0.01$ (crosses) and $\sigma = 0.1$ 
	(circles). Bottom: error $\Iinf-\Iinfh$ with 
	$\sigma = 0.01$	(solid) and $\sigma = 0.1$ (dashed).}
	\label{fig:chua_IV}
	\end{figure}
	
	\begin{figure}
	\centering
	\psfrag{volts}[bc]{\small $v\;[V]$}
	\psfrag{current}[bl]{\small $\inp\;[A]$}
	\psfrag{time}[cc]{\small $t\;[s]$}
	\includegraphics[scale=0.4]{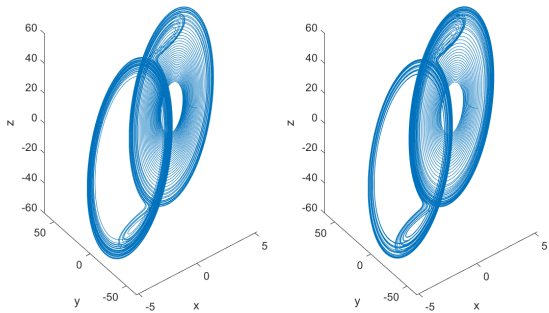}
	\caption{Attractors of the ground truth Chua circuit
	(left) and of the identified Chua circuit (right).
	The trajectories in the states $(x,y,z)$ are obtained 
	with a modal canonical state-space realization of the
	original $G_a(s) = G(s)/(1- 4 G(s))$ and of the 
	estimated $\hat{G}_a(s)$.}
	\label{fig:chua_validation}
	\end{figure}
	
	To compare the complete identified model with the ground
	truth model, we first realize the linear dynamics of each
	system (lumped with the linear component of $h$) in 
	the modal canonical state-space form. Starting from a
	nonzero initial condition, the resulting trajectories are
	shown in Figure \ref{fig:chua_validation}. It can be seen
	that the ``double-scroll'' attractors are qualitatively
	very similar.
	
	\subsection{Discussion}

	It can be argued that the choice of the feedback gain
	$k$ is key to the success of the identification 
	procedure developed in Section \ref{sec:feedlin} when it
	is applied to the more realistic case dealt with in this
	section. In principle, $k$ does not need to exceed 
	$|\rho_1|$ by a very large margin, and indeed we chose it to
	be only slightly larger	than $|\rho_1|$ in both simulations
	above. Choosing a suitable $k$ in this case can be viewed as
	part of	experiment design. While our choices were 
	good enough to
	avoid issues with the measurement noise that is fed back into
	the system, it is clear	that difficulties might arise for
	systems with a large $|\rho_1|$. If that is the case, and 
	if it is possible, analog feedback should be used.
	
\section{Conclusion and Future work}
\label{sec:conductance}

In this paper, we have observed that feedback can
simplify the identification of a nonlinear system.
We have illustrated this idea with the elementary
situation where the original system is the feedback
interconnection of a passive LTI system and a static
nonlinearity. In this case, the use of output feedback
as part of experiment design provides a straighforward
solution to the problem. This procedure is sufficient to
identify nonlinear behaviors such as excitability 
(Fitzugh-Nagumo) or chaos (Chua). 

It is important to 
mention that this method can be used as a means to obtain
initial estimates for a final identification stage
\cite{schoukens_identification_2017}, where we perform
nonlinear optimization of the simulation error of the
nonlinear feedback system. In this stage, consistency
guarantees can be obtained depending on the noise setting.

In future research, we aim to generalize the method to 
neuronal conductance-based models 
\cite{hodgkin_quantitative_1952}, in which case the fading 
memory element is dynamic rather than static.

\bibliographystyle{ieeetrans}
\bibliography{bibliography}
    
\end{document}